\newcommand{\SFive}{$\mathcal{S}$\textit{5}}
\newcommand{\KDFourFive}{$\mathcal{KD}$\textit{45}}
\newcommand{\LSED}{$\mathcal{LSED}^R$}
\newcommand{\LSEDmin}{$\mathcal{LSED}^S$}
\newcommand{\Rel}[1]{R_{#1}}
\newcommand{\Kns}[1]{\mathbf{K}_{\mathbf{#1}}\,} 
\newcommand{\Bels}[1]{\mathbf{B}_{\mathbf{#1}}\,} 
\newcommand{\Poss}[1]  {\langle \mathbf{K}_{#1} \rangle\,}
\newcommand{\BPoss}[1]{\langle \mathbf{B}_{#1} \rangle\,}
\newcommand{\etal}{\textit{et.~al.}}
\newcommand{\iiff}{{\  \Leftrightarrow \ }}
\newcommand {\tland}{{\ \wedge \ }}
\newcommand {\tlor}{{\ \vee \ }}
\newcommand {\tlnot}{{\neg}}
\newcommand {\iimplies}{{\ \Rightarrow \ }}
\begin{document}

\title{L\"ob-Safe Logics for Reflective Agents
}


\author{Seth Ahrenbach         \and
        Second Author 
}


\institute{Seth Ahrenbach \at
              \email{ahrenbach.seth@gmail.com}           
           \and
           S. Author \at
              second address
}

\date{Received: date / Accepted: date}

\maketitle

\begin{abstract}
Epistemic and doxastic logics are modal logics for knowledge and belief, and serve as foundational models for rational agents in game theory, philosophy, and computer science. We examine the consequences of modeling agents capable of a certain sort of reflection. Such agents face a formal difficulty due to L\"ob's Theorem, called L\"ob's Obstacle in the literature. We show how the most popular axiom schemes of epistemic and doxastic logics suffer from L\"ob's Obstacle, and present two axiom schemes that that avoid L\"ob's Obstacle, which we call Reasonable L\"ob-Safe Epistemic Doxastic logic (\LSED) and Supported L\"ob-Safe Epistemic Doxastic logic (\LSEDmin). 
\keywords{L\"ob's Theorem \and Epistemic Logic \and Doxastic Logic \and Agent Foundations}
\end{abstract}
\section{Introduction}

The standard formalization for an agent's belief is the modal logic $\mathit{KD}45$. Similarly, the standard for an agent's knowledge is the modal logic $\mathit{S5}$. The logics of belief and knowledge in the literature ignore a problem facing agents of a certain reflective type, identified by Smullyan in \cite{smullyan}, who can reason about self-referential sentences.\footnote{Smullyan refers to these as reflexive reasoners, but we use the term `reflective' in order to avoid ambiguity with the reflexive frame condition on worlds.} This paper confronts the problem of reflective reasoners facing modal logics of belief and knowledge, showing that most of the standard approaches fall short. We identify candidate multimodal logics of knowledge and belief that avoid the problem for reflective reasoning agents, and explain the different attributes of agents modeled by each.

In Smullyan's, ``Logicians who reason about themselves," he considers epistemic problems related to undecidability results in mathematics. He identifies ``a complete parallelism between logicians who believe propositions and mathematical systems that prove propositions." In provability logic, the formula $\varphi \iiff \tlnot \Bels{i}\varphi$ expresses the G\"odel proposition, ``This proposition is not provable in system $i$."\footnote{We ignore subtleties of encodings, here, because they represent the same proposition whether G\"odel-encoded or not.} In a doxastic interpretation, the same formula expresses the reflective belief, ``agent $i$ does not believe this proposition." This means that for any doxastic or epistemic system, if the agents it models are reflective, then it must properly handle the complications that arise from such self-reference.  

A \emph{reflective} agent is one that can form beliefs and knowledge about self-referential sentences and propositions. A biconditional is used to formalize the self-reference, as in the following: $\varphi \iiff (\Bels{i}\varphi \iimplies \psi)$. The reflective proposition ``agent $i$ does not believe this proposition" is of that form, where $\psi$ is replaced with $\bot$ in order to formalize it: $\varphi \iiff (\Bels{i}\varphi \iimplies \bot)$, or equivalently, $\varphi \iiff (\tlnot\Bels{i}\varphi)$. 

The above uses the $\Bels{i}$ operator for belief, but just as easily we could have used $\Kns{i}$. Most humans are capable of reasoning about self-referential propositions, so a logic for human knowledge ought to include such propositions. However, with the Truth Axiom of the knowledge operator, this seems to yield inconsistency. Returning to Smullyan's parallelism, the Truth Axiom, $\Kns{i}\varphi \iimplies \varphi$ would translate to provability logic as the axiom ``if formula $\varphi$ is provable in system $i$, then $\varphi$ is true". This cannot be an axiom in provability logic, as it is for epistemic logic, because it yields the very same inconsistency that mathematical system $i$ would face if it could prove its own soundness, due to G\"odel's Second Incompleteness Theorem.\cite{godel}

The problem facing epistemic and doxastic logics is due to L\"ob in \cite{Lob} from the mathematical logic perspective, and Smullyan presents it in \cite{smullyan} from the doxastic logic perspective. It intersects with contemporary research in artificial intelligence foundations, as in \cite{yudkowsky}, which addresses the problem as it pertains to agents being confident in their own conclusions. They have named the problem L\"ob's Obstacle, or the L\"obian Obstacle.

In what follows, we examine the mathematical logic that underpins this issue. Just as Smullyan identified the problem as one facing reflective reasoners with Axioms K and 4 and the Rule of Necessitation in \cite{smullyan}, L\"ob identified these same conditions as the ones that allow a mathematical system to derive his theorem in \cite{Lob}. We highlight the obstacle this presents to epistemic and doxastic logics, and show how to avoid it. We present relaxed axiom schemas for reasoning about knowledge and belief for reflective agents.


\section{L\"ob's Theorem}
\label{sec:lob_section}
L\"ob's Theorem takes its name from Martin Hugo L\"ob, who tackled a question of mathematical logic posed by Leon Henkin in the years following the results of G\"odel. Henkin asked what could be said of propositions asserting their own provability, as opposed to unprovability in the case of G\"odel sentences.\cite{sep_prov_log} L\"ob answered by showing that in a consistent system, proof of soundness is limited to propositions that are actually provable, and not as a general property of the system.

L\"ob's Theorem in provability logic is,
\begin{center}~\label{lob}
	\begin{equation}
	\Box(\Box\varphi \iimplies \varphi)\iimplies \Box\varphi.
	\end{equation}
\end{center}
The $\Box$ is interpreted as ``provability" in some formal system at least as powerful as Peano arithmetic. However, the theorem will occur in other modal logics if certain conditions are met. For example, if the $\Box$ is interpreted as knowledge, and those conditions are met, then L\"ob's Theorem will hold.




Here we give a template derivation of L\"ob's Theorem, which we shall refer to below when describing how L\"ob's Obstacle corrupts various epistemic logics. Following this proof, we present the conditions that will cause a modal logic axiom schema to derive the theorem.
\\
\textbf{Proof:}

\begin{tabular}{ll}
\toprule
(1) & $\Box(\Box\varphi \rightarrow \varphi)$ \dotfill Assumption \\
(2) & $\Box(\psi \leftrightarrow (\Box\psi\rightarrow \varphi))$ \dotfill L\"ob Sentence\footnote{Sometimes referred to as a Curry sentence after logician Haskell Curry.} \\
(3) & $\Box(\Box\psi \leftrightarrow \Box(\Box\psi \rightarrow \varphi))$ \dotfill Axiom K, (2) \\
(4) & $\Box(\Box\psi \rightarrow \Box(\Box\psi \rightarrow \varphi))$ \dotfill (3) Simplification of $\leftrightarrow$ \\
(5) & $\Box(\Box\psi \rightarrow (\Box\Box\psi \rightarrow \Box\varphi))$ \dotfill (4) Axiom K \\
(6) & $\Box(\Box\psi \rightarrow \Box\Box\psi)$ \dotfill Axiom 4 \\
(7) & $\Box(\Box\psi \rightarrow \Box\varphi)$ \dotfill (5), (6) \\
(8) & $\Box(\Box\psi \rightarrow \varphi)$ \dotfill (7), (1) \\
(9) & $\Box\psi$ \dotfill (3), (8) \\
(10) & $\Box\Box\psi$ \dotfill (9), Axiom 4 \\
(11) & $\Box\Box\psi \rightarrow \Box\varphi$ \dotfill (8), Axiom K \\
(12) & $\Box\varphi$ \dotfill (10), (11) \\
\bottomrule
\end{tabular}

\hfill $\mathcal{QED}$

Mathematical and provability logicians refer to the key components of this proof as L\"ob Conditions\cite{Boolos}. Identifying them in the proof above helps us identify which epistemic logics collide with L\"ob's Obstacle. Conversely, understanding how the L\"ob Conditions interact helps us construct epistemic logics that avoid L\"ob's Obstacle.

The Conditions are:
\begin{enumerate}
	\item The L\"ob Sentence. A self-referential or reflective sentence, also formalizable as a modal fixed point.
	\item Axiom K. The standard distribution axiom of normal modal logics.
	\item Axiom 4. The axiom corresponding to a transitive frame relation.
	\item The rule of necessitation. Likewise a standard feature of normal modal logics.
\end{enumerate} 

The L\"ob Sentence is sometimes not mentioned as a Condition, because L\"ob's Theorem is typically studied in the context of mathematical logic or provability logic, where such self-referential expressiveness is known to exist. We point out, however, that humans are capable of reasoning about self-referential sentences, and any advanced artificial agent will be able to do so, as well. Because systems for representing human-like reasoners should include self-referential sentences and modal fixed points, this condition is satisfied for our concerns. For example, in the foundations of game theory, ideally rational agents can reason about common knowledge among each other, which is itself defined as a modal fixed point, as showin in Barwise \cite{barwise1988three}. Items (2) and (4) are constants for all normal modal logics. What remains is for an axiom schema to include (3).

Finally, we note the importance of L\"ob's Theorem's antecedent: $\Box(\Box\varphi \iimplies \varphi)$. Epistemic logics typically include the antecedent as an axiom, representing the widely held view that knowledge entails truth, in which case L\"ob's Theorem will allow us to derive $\Box\varphi$ for all $\varphi$. We assert that epistemic logicians must come to terms with L\"ob's Obstacle and ensure that their models of human-level knowledge do not crash into it.

We define a $\mathit{crash}$ into L\"ob's Obstacle as follows:
\begin{definition}\label{Crash}
	A logic $\mathcal{L}$ with modal operators $\mathbf{\Box_{{i}\in\mathbb{N}}}$ $\mathbf{crashes}$ just in case the axioms of $\mathcal{L}$ derive L\"ob's Theorem for at least one $\Box_{i}$ and $\mathcal{R}_i$, the relation defining $\Box_{i}$, is either serial or reflexive.
	
\end{definition}

It suffices to say that the relevant relation $\mathcal{R}_i$ be $\mathit{serial}$, because $\mathit{reflexivity}$ implies $\mathit{seriality}$, but we include both for clarity. We call it a $\mathit{crash}$ because it renders the logic $\mathcal{L}$ unsound.

\begin{theorem}\label{thm:crash}
	A logic $\mathcal{L}$ that crashes is unsound.
\end{theorem}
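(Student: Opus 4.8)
The plan is to show that a crashing logic proves a formula that is refuted by its own intended semantics, which is exactly what it means to be unsound. Since reflexivity implies seriality (as noted after Definition~\ref{Crash}), and since the reflexive case only makes matters worse, I would carry out the argument for a serial $\mathcal{R}_i$ and let the reflexive case follow a fortiori. The target is to derive either $\bot$ outright, or the formula $\Box_i\bot$, which every serial frame refutes.

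First I would assemble the two ingredients that Definition~\ref{Crash} hands us. Because $\mathcal{R}_i$ is serial, the seriality axiom $\neg\Box_i\bot$ --- equivalently $\Box_i\bot \iimplies \bot$ --- is valid on the intended frames, and is a theorem whenever the logic matches its frame condition, as these named logics do. Because the logic derives L\"ob's Theorem for $\Box_i$, all of the L\"ob Conditions listed above are in force; in particular the rule of necessitation is available.

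Then I would run the diagonal step at $\varphi := \bot$. Necessitation applied to $\Box_i\bot \iimplies \bot$ gives $\Box_i(\Box_i\bot \iimplies \bot)$; the $\bot$-instance of L\"ob's Theorem is $\Box_i(\Box_i\bot \iimplies \bot) \iimplies \Box_i\bot$; modus ponens then yields $\Box_i\bot$, and a second modus ponens against $\Box_i\bot \iimplies \bot$ yields $\bot$. Hence $\mathcal{L}$ is inconsistent, and since $\bot$ holds in no world of any frame, $\mathcal{L}$ cannot be sound. As a cross-check that does not even presuppose the axiom is a theorem, one can observe directly that the $\bot$-instance of L\"ob's Theorem is falsified at every world of a serial frame: there $\Box_i\bot$ is false, while $\Box_i\bot \iimplies \bot$, and hence $\Box_i(\Box_i\bot \iimplies \bot)$, is true, so the whole implication is a true antecedent forcing a false consequent; a theorem invalid on the intended frames is unsound by definition.

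I expect the only real subtlety to be the correspondence bookkeeping rather than any calculation: one must be sure that seriality is precisely the frame condition supplying the instance $\Box_i\bot \iimplies \bot$ that feeds L\"ob's antecedent, and that reflexivity is strictly stronger (its Truth Axiom gives $\Box_i\varphi \iimplies \varphi$ for every $\varphi$, collapsing the logic even more dramatically into deriving $\Box_i\varphi$, and then $\varphi$, for all $\varphi$). Confirming that necessitation and the remaining L\"ob Conditions are genuinely present is immediate, since \emph{crashes} is defined to presuppose that the logic already derives L\"ob's Theorem.
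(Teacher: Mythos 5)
Your proposal is correct and follows essentially the same route as the paper's own proof: seriality yields $\neg\Box_i\bot$ (the paper writes this as $\neg\Box_i(\varphi\tland\tlnot\varphi)$), necessitation of its contrapositive form $\Box_i\bot\iimplies\bot$ feeds L\"ob's Theorem, and the resulting $\Box_i\bot$ contradicts seriality. Your added semantic cross-check and the explicit handling of the reflexive case are harmless extras, not a different argument.
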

\begin{proof}[Proof.]
	Suppose $\mathcal{L}$ $\mathbf{crashes}$. Then $\mathcal{L}$'s axioms derive L\"ob's Theorem, and there is one $\Box_{i}$ operator with relation $\mathcal{R}_i$ that is serial. A $\mathit{serial}$ frame relation corresponds to the formula $\Box_{i}\varphi\iimplies\Diamond_{i}\varphi$. Equivalently, $\tlnot\Box_{i}\varphi \tlor \tlnot \Box_{i} \tlnot\varphi$. By DeMorgan's Law, this is equivalent to $\tlnot(\Box_{i}\varphi \tland \Box_{i}\varphi)$. Because $\Box_{i}$ distributes and exports for conjunction, this is equivalent to $\tlnot\Box_{i}(\varphi\tland\tlnot\varphi)$, which amounts to the claim that contradictions are not accessible via the $\Rel{i}$ relation. This is equivalent to $\Box_{i}(\varphi\tland\tlnot\varphi)\iimplies(\varphi\tland\tlnot\varphi)$, because assuming the opposite of a theorem implies a contradiction. By the Rule of Necessitation, it follows that $\Box_{i}(\Box_{i}(\varphi\tland\tlnot\varphi)\iimplies(\varphi\tland\tlnot\varphi))$. But by L\"ob's Theorem, it follows from this that $\Box_{i}(\varphi\tland\tlnot\varphi)$. This contradicts with the earlier result, that $\tlnot\Box_{i}(\varphi\tland\tlnot\varphi)$, which is an equivalent expression of seriality. Therefore, the $\mathbf{crashing}$ logic $\mathcal{L}$ is unsound.
\end{proof}

The prevailing theories in epistemic logic ignore L\"ob's Obstacle. Due to the prevalence of the Positive and Negative Introspection axioms, and serial frame relations on the knowledge or belief operators, these logics crash. In order to avoid the crash, they must deny the existence of L\"ob sentences in the language. But this so limits the expressive power of the logic that it does not reasonably apply to human-like agents.

In what follows, we round up the usual suspects of epistemic-doxastic logic and show that, on the assumption that they aim to capture human-like reasoning, they crash into L\"ob's Obstacle. Both epistemic logics and doxastic logics, as they are commonly axiomatized, result in contradictions.

\section{Epistemic Logics that Crash}
\label{sec:crashing_logics}
\subsection{\SFive\ Epistemic Logic}
The most prominent epistemic logic in the literature, by far, is \SFive\ epistemic logic. \SFive\ epistemic logic is routinely presented as the logic of knowledge, and often serves as a static base for dynamic extensions to epistemic logic involving action and communication. Its characteristic axioms are:

\begin{table}[H]
	\begin{center}
		\begin{tabular}{| l r |}
			\hline
			$\Kns{i}(\varphi \iimplies \psi)\iimplies (\Kns{i}\varphi \iimplies \Kns{i}\psi)$ & Axiom K \\
			$\Kns{i}\varphi \iimplies \varphi$ & Truth Axiom (Axiom T)\\
			$\tlnot\Kns{i}\varphi \iimplies \Kns{i}\tlnot\Kns{i}\varphi$ & Negative Introspection (Axiom 5)  \\
			From $\vdash \varphi$ and $\vdash \varphi \iimplies \psi$, infer $\vdash\psi$ & Modus Ponens\\
			From $\vdash \varphi$, infer $\vdash \Kns{i}\varphi$ & Necessitation of $\Kns{i}$\\
			\hline
		\end{tabular}
		\caption{Logic of \SFive}~\label{S5}
	\end{center}
\end{table}

Axiom 5 is called the Negative Introspection axiom, or sometimes in philosophy circles, the Wisdom Axiom. It is read, ``If $i$ does not know that $\varphi$, then $i$ knows that $i$ doesn't know $\varphi$". Other than being clearly invalid for humans, this axiom and (3) allows us to derive,
\begin{equation*}
\Kns{i}\varphi \iimplies \Kns{i}\Kns{i}\varphi
\end{equation*}

\textbf{Proof:}

\begin{tabular}{ll}
\toprule
(1) & $\lnot K_i\lnot K_i\varphi \rightarrow K_i\varphi$ \dotfill Contrapositive of Axiom 5 \\
(2) & $K_i\lnot K_i\lnot K_i\varphi \rightarrow K_iK_i\varphi$ \dotfill Rule of Necessitation on (1), Axiom K \\
(3) & $\varphi \rightarrow \lnot K_i\lnot \varphi$ \dotfill Axiom T, Contrapositive \\
(4) & $\lnot K_i\lnot\varphi \rightarrow K_i\lnot K_i\lnot\varphi$ \dotfill Axiom 5 \\
(5) & $\varphi \rightarrow K_i\lnot K_i\lnot \varphi$ \dotfill (3), (4) \\
(6) & $K_i\varphi \rightarrow K_i\lnot K_i\lnot K_i\varphi$ \dotfill $K_i\varphi$/$\varphi$, (5) \\
(7) & $K_i\varphi\rightarrow K_iK_i\varphi$ \dotfill (2), (6) \\
\bottomrule
\end{tabular}

\hfill $\mathcal{QED}$
\\
Thus, \SFive\ satisfies L\"ob's three conditions, if we assume the presence of self-referential sentences possible, which we should. Therefore, with $\Kns{i}$ instead of $\Box$, the proof of L\"ob's Theorem is possible in this brand of \SFive. However, to make matters worse, the antecedent of L\"ob's Theorem is itself an axiom of S5. Therefore, $\Kns{i}\varphi$ is a theorem, for all $\varphi$.

We take this as a \emph{reductio ad absurdum} that \SFive\ epistemic logic cannot be a logic for reasoning about the knowledge of agents with expressive power beyond Peano arithmetic. Therefore, it cannot be a logic of knowledge for humans, or human-like agents.

\subsection{Hintikka's S4 Epistemic Logic}
\label{sec:hint_s4}
In Hintikka's 1967 \emph{Knowledge and Belief: A logic of the two notions}, he presented an epistemic logic for determining the validity and consistency of claims people make about knowledge and belief. A formal epistemic theory should strive for adequate philosophical grounding in good epistemology.  Hintikka dedicates a great portion of the book to exploring how his formal system handles the intuitive judgments of philosophers regarding ordinary language statements, which was the primary method at the time. An epistemic logic divorced from a philosophical foundation is no longer an epistemic logic for reasoning about human-like knowledge.

He rejected out of hand the negative introspection axiom (Axiom 5) for knowledge, but chose to include positive introspection (Axiom 4):\newline $\Kns{i}\varphi\iimplies\Kns{i}\Kns{i}\varphi$. His interpretation of $\Kns{i}$ is "$i$ could come to know $\varphi$ based on what $i$ currently knows". This subjunctive or hypothetical reading of knowledge offers some intuitive appeal, and could justify including Axiom 4. 

However, since Hintikka's epistemic system is meant for human-like reasoners who can express sentences like, ``If I know this sentence is true, then 1 + 1 = 2," it must be able to soundly handle self-referential sentences. Likewise, including Axiom 4, Positive Introspection, means that Hintikka's logic satisfies the L\"ob Conditions. Thus, it derives L\"ob's Theorem. Hintikka includes the Truth Axiom, $\Kns{i}\varphi \iimplies \varphi$, from which it follows that Hintikka's epistemic logic crashes.

\begin{theorem}
	Hintikka's $\mathcal{S}\mathit{4}$ epistemic logic crashes.
\end{theorem}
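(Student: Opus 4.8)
The plan is to verify both clauses of Definition~\ref{Crash} for Hintikka's $\mathcal{S}4$ with the generic operator $\Box_i$ instantiated as the knowledge operator $\Kns{i}$: first, that the axioms derive L\"ob's Theorem for $\Kns{i}$; and second, that the relation $\mathcal{R}_i$ governing $\Kns{i}$ is serial or reflexive. Once both are in hand, the conclusion is immediate from the definition, and unsoundness follows from Theorem~\ref{thm:crash}.

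For the first clause I would lean directly on the template derivation of Section~\ref{sec:lob_section}. That derivation uses exactly three ingredients beyond the L\"ob Sentence: Axiom K, Axiom 4, and the Rule of Necessitation. Hintikka's system supplies all three. It is a normal modal logic, so Axiom K and Necessitation are present, and it explicitly adopts Positive Introspection, $\Kns{i}\varphi \iimplies \Kns{i}\Kns{i}\varphi$, which is precisely Axiom 4. The L\"ob Sentence is available by the paper's standing assumption that a logic for human-like reasoners must admit self-referential propositions. Accordingly, I would reinstantiate the Section~\ref{sec:lob_section} derivation verbatim with $\Box$ replaced throughout by $\Kns{i}$, obtaining $\Kns{i}(\Kns{i}\varphi \iimplies \varphi) \iimplies \Kns{i}\varphi$.

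For the second clause I would invoke the Truth Axiom, $\Kns{i}\varphi \iimplies \varphi$, which Hintikka retains. This axiom corresponds to a reflexive frame relation on $\mathcal{R}_i$, and since reflexivity implies seriality, the relation defining $\Kns{i}$ satisfies the condition of Definition~\ref{Crash}. Combining the two clauses, Hintikka's $\mathcal{S}4$ crashes.

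I do not expect a genuine technical obstacle here, since the work reduces to matching Hintikka's axioms against the L\"ob Conditions already catalogued. The only point requiring care is conceptual rather than computational: the argument depends on treating the L\"ob Sentence as legitimately expressible, which rests on the earlier claim that any adequate logic of human-level knowledge must permit self-reference. Once that is granted, the template derivation transfers with no modification, and the two frame-condition observations close the argument.
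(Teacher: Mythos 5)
Your proposal is correct and takes essentially the same route as the paper: check that Hintikka's $\mathcal{S}\mathit{4}$ satisfies the L\"ob Conditions (normal modal logic, Axiom 4, reflective sentences) so that L\"ob's Theorem is derivable, then observe that the Truth Axiom gives a reflexive (hence serial) frame relation, so Definition~\ref{Crash} applies. The paper's own proof is just a terser statement of exactly these two steps.
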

\begin{proof}[Proof.]
	The logic $\mathcal{S}\mathit{4}$ is a normal modal logic with a transitive frame relation. Epistemic logic for humans expresses reflective sentences. Thus, L\"ob's Theorem is derivable. Additionally, the frame relation for $\mathcal{S}\mathit{4}$ is reflexive. Therefore, by definition \ref{Crash}, $\mathcal{S}\mathit{4}$ crashes.
\end{proof}

\subsection{Kraus and Lehmann System}
\label{sec:kl}

In \cite{KrausLehmann}, Kraus and Lehmann to combine knowledge and belief in a single system of modal logic suitable for human-like agents. In particular, they extend the work of Halpern and Moses in \cite{HalpernMoses}, who conjecture that such a multimodal logic would be useful for modeling agents with incomplete information. As an early attempt at formalizing knowledge and belief together in a single modal axiom scheme, it represents an important milestone.

They axiomatize knowledge and belief as follows.

\begin{table}[H]
	\begin{center}
		\begin{tabular}{| l r |}
			\hline
			$\Kns{i}(\varphi \iimplies \psi) \iimplies (\Kns{i}\varphi \iimplies \Kns{i}\psi)$ & Distribution of $\Kns{i}$ \\
			$\Kns{i}\varphi \iimplies \varphi$ & Truth \\
			$\tlnot\Kns{i}\varphi \iimplies \Kns{i}\tlnot\Kns{i}\varphi$ & Negative Introspection \\
			$\Bels{i}(\varphi \iimplies \psi) \iimplies (\Bels{i}\varphi \iimplies \Bels{i}\psi)$ & Distribution of $\Bels{i}$\\
			$\Bels{i}\varphi \iimplies \BPoss{i}\varphi$ & Belief Consistency \\
			$\Kns{i}\varphi \iimplies \Bels{i}\varphi$ & Knowledge Entails Belief \\
			$\Bels{i}\varphi \iimplies \Kns{i}\Bels{i}\varphi$ & Conscious Belief\\
			From $\vdash \varphi$ and $\vdash \varphi \iimplies \psi$, infer $\vdash\psi$ & Modus Ponens\\
			From $\vdash \varphi$, infer $\vdash \Kns{i}\varphi$ & Necessitation of $\Kns{i}$\\
			\hline
		\end{tabular}
		\caption{Logic of Kraus and Lehmann}~\label{KL}
	\end{center}
\end{table}

In their article, and in Meyer and van der Hoek's \cite{MeyerHoek}, they show that $\Bels{i}(\Bels{i}\varphi \iimplies \varphi)$ is a theorem. This, combined with the satisfaction of the L\"ob Conditions, entails that Kraus and Lehmann's logic crashes. 

\begin{theorem}
	Kraus and Lehmann's logic crashes.
\end{theorem}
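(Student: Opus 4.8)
The plan is to verify that Kraus and Lehmann's logic satisfies the conditions of Definition~\ref{Crash}, namely that its axioms derive L\"ob's Theorem for the belief operator $\Bels{i}$, and that the relation $\Rel{i}$ defining $\Bels{i}$ is serial. Given the structure established in the preceding subsections, the argument should mirror the template: I would identify the three L\"ob Conditions (the presence of a L\"ob sentence, Axiom K, and Axiom 4) for the $\Bels{i}$ operator, together with Necessitation, and then invoke the definition of $\mathbf{crashes}$ directly.

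First I would note that the paper has already flagged the essential ingredient: Kraus and Lehmann (and Meyer and van der Hoek) establish that $\Bels{i}(\Bels{i}\varphi \iimplies \varphi)$ is a theorem of their system. This is precisely the antecedent of L\"ob's Theorem for $\Bels{i}$, so once L\"ob's Theorem is derivable for $\Bels{i}$ we immediately obtain $\Bels{i}\varphi$ for every $\varphi$, collapsing the belief operator. To get L\"ob's Theorem itself, I would check the three conditions in turn. Axiom K for $\Bels{i}$ is the Distribution of $\Bels{i}$ axiom, listed explicitly in Table~\ref{KL}. Necessitation for $\Bels{i}$ follows from Necessitation of $\Kns{i}$ together with Knowledge Entails Belief ($\Kns{i}\varphi \iimplies \Bels{i}\varphi$): from $\vdash\varphi$ we get $\vdash\Kns{i}\varphi$ and hence $\vdash\Bels{i}\varphi$. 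The L\"ob sentence is available because the logic is meant for human-like agents capable of self-reference, as argued in Section~\ref{sec:lob_section}.

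The step I expect to be the main obstacle is establishing Axiom~4 for belief, i.e. $\Bels{i}\varphi \iimplies \Bels{i}\Bels{i}\varphi$, since it is not listed directly in Table~\ref{KL}. Here I would derive it from the combination of Conscious Belief ($\Bels{i}\varphi \iimplies \Kns{i}\Bels{i}\varphi$) and Knowledge Entails Belief ($\Kns{i}\Bels{i}\varphi \iimplies \Bels{i}\Bels{i}\varphi$, an instance of the latter axiom with $\Bels{i}\varphi$ substituted for $\varphi$). Chaining these two gives $\Bels{i}\varphi \iimplies \Bels{i}\Bels{i}\varphi$ by transitivity of implication, which is exactly positive introspection for $\Bels{i}$. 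With all L\"ob Conditions thus satisfied for $\Bels{i}$, L\"ob's Theorem is derivable for the belief operator.

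Finally I would confirm seriality of $\Rel{i}$ for $\Bels{i}$. The Belief Consistency axiom, $\Bels{i}\varphi \iimplies \BPoss{i}\varphi$, is the modal formula characterizing a serial frame relation, so $\Rel{i}$ is serial. Having shown both that the axioms derive L\"ob's Theorem for $\Bels{i}$ and that $\Rel{i}$ is serial, I would conclude by Definition~\ref{Crash} that Kraus and Lehmann's logic $\mathbf{crashes}$, and hence by Theorem~\ref{thm:crash} that it is unsound.
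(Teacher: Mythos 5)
Your proposal is correct and follows essentially the same route as the paper: it derives Positive Belief Introspection by chaining Conscious Belief with the $\Bels{i}\varphi$-instance of Knowledge Entails Belief, obtains seriality from Belief Consistency, and invokes Definition~\ref{Crash}, just as the paper does for the $\Bels{i}$ operator. The only differences are cosmetic --- you spell out the derived necessitation rule for $\Bels{i}$ more explicitly, and you omit the paper's additional (and redundant, since one crashing operator suffices) observation that $\Kns{i}$ is an \SFive\ operator and crashes as well.
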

\begin{proof}[Proof.]
	The logic consists of two normal modal operators, $\Kns{i}$ and $\Bels{i}$. For the $\Kns{i}$ operator, the Truth and Negative Introspection axioms together entail Positive Introspection. Therefore, $\Kns{i}$ is an \SFive\ operator, and crashes. For $\Bels{i}$, we have a serial frame relation due to the Belief Consistency axiom, and we can derive Positive Belief Introspection from Conscious Belief and Knowledge Entails Belief. Thus, $\Bels{i}$ has positive introspection and is a normal modal operator, which expresses reasoning about reflective sentences. Thus, L\"ob's Theorem is derivable for $\Bels{i}$. Because $\Bels{i}$ has a serial frame relation, it crashes. Furthermore, since $\Bels{i}(\Bels{i}\varphi \iimplies \varphi)$ is a theorem, it follows that $\Bels{i}\varphi$ is a theorem.
\end{proof}

\subsection{\KDFourFive\ Doxastic Logic}
\label{sec:crap}
\KDFourFive\ is perhaps the most dominant formalization of doxastic logic. It includes Positive and Negative Belief Introspection and the Belief Consistency axioms. 

\begin{table}[H]
	\begin{center}
		\begin{tabular}{| l r |}
			\hline
			$\Bels{i}(\varphi \iimplies \psi) \iimplies (\Bels{i}\varphi \iimplies \Bels{i}\psi)$ & Distribution of $\Bels{i}$\\
			$\Bels{i}\varphi \iimplies \BPoss{i}\varphi$ & Belief Consistency \\
			$\Bels{i}\varphi \iimplies \Bels{i}\Bels{i}\varphi$ & Positive Belief Introspection \\
			$\tlnot\Bels{i}\varphi \iimplies \Bels{i}\tlnot\Bels{i}\varphi$ & Negative Belief Introspection\\
			From $\vdash \varphi$ and $\vdash \varphi \iimplies \psi$, infer $\vdash\psi$ & Modus Ponens\\
			From $\vdash \varphi$, infer $\vdash \Kns{i}\varphi$ & Necessitation of $\Bels{i}$\\
			\hline
		\end{tabular}
		\caption{Logic of \KDFourFive}~\label{KD45}
	\end{center}
\end{table}

\KDFourFive\ satisfies three L\"ob Conditions (Axiom K - Distribution of $\Bels{i}$, Axiom 4 - Positive Belief Introspection, and the Rule of Necessitation), so for agents capable of self-referential reasoning, L\"ob's Theorem is derivable.

The Belief Consistency Axiom $\Bels{i}\varphi\iimplies\BPoss{i}\varphi$ is equivalent to\newline $\tlnot(\Bels{i}\varphi \tland \Bels{i}\tlnot\varphi)$, which is furthmore equivalent to $\tlnot\Bels{i}(\varphi \tland \tlnot \varphi)$, which results in the following disaster.

\begin{theorem}[Consistency Disaster]~\label{no_bel_cons}
	If $\tlnot\Bels{i}(\varphi \tland \tlnot \varphi)$ and\\ $\Bels{i}(\Bels{i}\varphi \iimplies\varphi)\iimplies \Bels{i}\varphi$ are theorems, then $\Bels{i}(\varphi \tland \tlnot \varphi)$ is a theorem.
\end{theorem}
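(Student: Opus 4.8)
The plan is to run, in the forward direction, the same instantiation that underlies the proof of Theorem~\ref{thm:crash}: apply L\"ob's Theorem to the contradictory formula $\varphi \tland \tlnot\varphi$ and simply read off the conclusion it forces, rather than deriving a contradiction from it. The two hypotheses are exactly the ingredients needed: the first, $\tlnot\Bels{i}(\varphi \tland \tlnot\varphi)$, lets me discharge the antecedent of L\"ob's Theorem for free, and the second is L\"ob's Theorem itself, which I instantiate at $\varphi \tland \tlnot\varphi$.

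Abbreviating $\varphi \tland \tlnot\varphi$ by $\chi$, I would proceed as follows. First, take the first hypothesis $\tlnot\Bels{i}\chi$ as a theorem. By propositional logic, from $\tlnot A$ one obtains $A \iimplies B$ for any $B$, so with $A := \Bels{i}\chi$ and $B := \chi$ this yields that $\Bels{i}\chi \iimplies \chi$ is a theorem. Applying the Rule of Necessitation for $\Bels{i}$ gives $\Bels{i}(\Bels{i}\chi \iimplies \chi)$, which is precisely the antecedent of L\"ob's Theorem at $\chi$. Instantiating the second hypothesis at $\chi$, namely $\Bels{i}(\Bels{i}\chi \iimplies \chi) \iimplies \Bels{i}\chi$, and applying Modus Ponens, I conclude $\Bels{i}\chi$, that is, $\Bels{i}(\varphi \tland \tlnot\varphi)$, as required.

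I expect no real obstacle, as the derivation is short and uses only propositional reasoning, Necessitation, and Modus Ponens once L\"ob's Theorem is in hand. The one step deserving care is the move from $\tlnot\Bels{i}\chi$ to $\Bels{i}\chi \iimplies \chi$: it must be justified by the \emph{falsity of the antecedent} $\Bels{i}\chi$ supplied by the first hypothesis, and not merely by the fact that $\tlnot\chi$ is a tautology. Since $\chi$ is a contradiction, its negation being a theorem would make $\Bels{i}\chi \iimplies \chi$ provably equivalent to $\tlnot\Bels{i}\chi$ anyway, so the belief-consistency hypothesis is genuinely doing the work. A minor point of bookkeeping is that L\"ob's Theorem is used here as a schema, instantiated at $\chi$, exactly as in the proof of Theorem~\ref{thm:crash}.
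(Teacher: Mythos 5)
Your proof is correct and follows essentially the same five-step derivation as the paper: obtain $\Bels{i}(\varphi\tland\tlnot\varphi)\iimplies(\varphi\tland\tlnot\varphi)$ from the consistency hypothesis by vacuous implication, apply Necessitation to get the antecedent of L\"ob's Theorem, and conclude by Modus Ponens. Your added remark that the implication is justified by the falsity of the antecedent (rather than by the tautologousness of $\tlnot\chi$ alone) is a fair clarification of what the paper labels simply as ``logically equivalent,'' but the argument is the same.
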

\textbf{Proof:}

\begin{tabular}{ll}
\toprule
(1) & $\lnot B_i(\varphi \land \lnot \varphi)$ \dotfill Belief is Consistent \\
(2) & $B_i(\varphi\land\lnot\varphi)\rightarrow (\varphi \land \lnot \varphi)$ \dotfill (1), logically equivalent \\
(3) & $B_i(B_i(\varphi \land \lnot \varphi)\rightarrow (\varphi \land \lnot \varphi))$ \dotfill Necessitation of $B_i$, (2) \\
(4) & $B_i(B_i(\varphi \land \lnot \varphi) \rightarrow (\varphi \land \lnot \varphi))\rightarrow B_i(\varphi \land \lnot \varphi)$ \dotfill L\"ob's Theorem \\
(5) & $B_i(\varphi \land \lnot \varphi)$ \dotfill (3), (4) \\
\bottomrule
\end{tabular}

\hfill $\mathcal{QED}$
\\

Thus, with Belief Consistency, L\"ob's Theorem, and Theorem \ref{no_bel_cons}, an inconsistency follows. Note the similarity here to the problem facing PA$^+$ were it to prove its own consistency, due to G\"odel's Second Incompleteness Theorem.\cite{godel}

\subsection{Analysis}
Defenders of these logics may wish to deny that these logics can express reflective sentences. However, as we argued earlier, this renders them unsuitable for human-like agents. Furthermore, in many cases, reflective sentences in the form of fixpoints are explicitly included. For example, notions of common knowledge and common belief are frequently added to one of the above logics. These include theorems of the form $\mathbf{C} \varphi \equiv (\varphi \tland \Lambda_{i\in\mathbb{G}}\Kns{i}\mathbf{C}\varphi)$, a straightforwardly reflective sentence, often parsed as ``$\varphi$, and everybody knows this sentence". 

Clearly, the solution is to remove one of the problematic axioms. One might wonder whether it would be acceptable to abandon the Truth Axiom for knowledge, or the Belief Consistency Axiom for belief, and allow L\"ob's Theorem to hold for the knowledge or belief operator in a way that avoids inconsistency. This would introduce more modesty to the notion of knowledge, where a human-like agent knows that her knowledge is true only for those propositions that she actually knows, but not in the general sense.\footnote{Smullyan referred to such agents as modest agents, for they are confident in the accuracy only of particular beliefs that they have good reasons (proofs) for, but lack a general confidence in their own beliefs.} 

What would this mean for epistemology? A false proposition would no longer imply a lack of knowledge, and the rejection of the truth axiom goes against the entire history of Western philosophical thought. This is to say, it would require robust philosophical defense, which we are not prepared to give here. Relaxing the Truth Axiom allows positive and negative introspection to live harmoniously with self-reference. We leave this for future work to explore. We note here, however, that work in \cite{modal_prisoner} show that agents with L\"ob's Theorem holding for their epistemic operators are able to cooperate in the Prisoner's Dilemma game.\footnote{Specifically, they are programs capable of inspecting their own source code.} Related research in formal agent modeling via programs that play games with each other and can examine each other's source code has been explored by Binmore, Howard, McAfee, Tennenholtz, and other game theorists. Barasz \etal introduce the novel approach of using the so-called G\"odel-L\"ob ($\mathbb{GL}$) modal logic of provability, presented in detail by Boolos in \cite{Boolos}. This logic consists of a $\Box$ operator for ``is provable in PA", with the above L\"ob Theorem as an axiom, in addition to Axiom K (Distribution).

Similarly, relaxing the Belief Consistency Axiom reduces a doxastic logic from a normative system for correct reasoning to a merely descriptive system about the psychology of agents with sometimes contradictory beliefs. Rather than pursuing that route, we explore axiom schemas that avoid the derivation of L\"ob's Theorem entirely.

\section{Avoiding L\"ob}
\label{sec:avoiding_lob}
In order to avoid L\"ob's Obstacle, we model agents with the following properties. First, their knowledge is true. Second, they do not always know whether they know something or not. Third, they do not believe inconsistencies. Fourth, they have strong evidence for their beliefs. This last condition represents a key weakening, and must be spelled out in greater detail. This model is better suited to human-like agents who interact with reality and lack perfect information about the environment, and most importantly, a logic for agents with these properties is what we call L\"ob Safe.

\begin{definition}
	A logic $\mathcal{L}$ with modal operators $\mathbf{\Box_{{i}\in\mathbb{N}}}$ is $\textbf{L\"ob Safe}$ just in case for each $\Box_{i}$ some L\"ob Condition is false or $\Box_{i}$ is not defined by a serial frame relation $\Rel{i}$. 
\end{definition}

Recall the L\"ob Conditions are the expressibility of L\"ob Sentences, $\mathcal{L}$'s being a normal modal logic, and the inclusion of Axiom 4 (which corresponds to a transitive frame relation). Thus, in defining a multimodal logic of agency that is L\"ob Safe, we must construct it with operators that carefully navigate these conditions. The most straightforward approach is to make sure a modal operator is never both transitive and serial. We adopt this approach with each modal operator as follows.

First, we reduce the belief operator to a basic System D modality, consisting of Belief Consistency: $\Bels{i}\varphi \iimplies \BPoss{i}
\varphi$. Second, we include an axiom imposing a necessary condition on belief, $\Bels{i}\iimplies \BPoss{i}\Kns{i}\varphi$, which we call Reasonable Belief (RB). Read this as "$i$ believes that $\varphi$ only if it is reasonable for $i$ that $i$ knows $\varphi$."\footnote{Our parsing of $\BPoss{i}$ as ``reasonable" is based on the notion that ``reasonable" connotes an existential quantification over a reason relation, e.g. "there is some reason to believe..."} A stronger version of this condition, $\Bels{i}\varphi \iimplies \Bels{i}\Kns{i}\varphi$, which we call Overconfident Belief (OB), is sufficient for deriving Positive Introspection for Belief (in conjunction with the axiom Knowledge implies Belief), which would cause L\"ob's Theorem to be derivable. 

Roughly, the difference between RB and OB is the strength of evidence $i$ must have. RB states that there must be some reason to believe, from $i$'s perspective, that $i$ knows $\varphi$. OB states that there must be no reason to believe, from $i$'s perspective, that $i$ does not know $\varphi$, which is a taller order, resulting in a situation in which $i$ believes that all of their beliefs constitute knowledge. A realistic rational agent should not hold this belief, just out of epistemic caution or modesty.

We do not include Negative Introspection for Belief because including it allows us to derive Positive Introspection for Belief. 

For the knowledge operator, we include the axiom that Knowledge entails Truth, as well as the axiom that Knowledge entails Belief, both of which correspond to most epistemological views. Both $\Bels{i}$ and $\Kns{i}$ remain normal modal operators.

The resulting logic is L\"ob Safe while remaining realistic for human-like agents, and is in the class of logics whose completeness is established via Sahlqvist's Theorems. We call this resulting logic \LSED, for Reasonable L\"ob-Safe Epistemic Doxastic logic.

\begin{table}[H]
	\begin{center}
		\begin{tabular}{| l r |}
			\hline
			$\Kns{i}(\varphi \iimplies \psi) \iimplies (\Kns{i}\varphi \iimplies \Kns{i}\psi)$ & Distribution of $\Kns{i}$ \\
			$\Kns{i}\varphi \iimplies \varphi$ & Truth \\
			$\Bels{i}(\varphi \iimplies \psi) \iimplies (\Bels{i}\varphi \iimplies \Bels{i}\psi)$ & Distribution of $\Bels{i}$\\
			$\Bels{i}\varphi \iimplies \BPoss{i}\varphi$ & Belief Consistency \\
			$\Kns{i}\varphi \iimplies \Bels{i}\varphi$ & Knowledge entails Belief \\
			$\Bels{i}\varphi \iimplies \BPoss{i}\Kns{i}\varphi$ & Reasonable Belief\\
			From $\vdash \varphi$ and $\vdash \varphi \iimplies \psi$, infer $\vdash\psi$ & Modus Ponens\\
			From $\vdash \varphi$, infer $\vdash \Kns{i}\varphi$ & Necessitation of $\Kns{i}$\\
			\hline
		\end{tabular}
		\caption{Logic of \LSED}~\label{lsed}
	\end{center}
\end{table}

Neither the belief operator nor the knowledge operator is susceptible to L\"ob's Obstacle, as L\"ob's Theorem is not derivable in the system.

\begin{theorem}
	\LSED\ is L\"ob Safe.
\end{theorem}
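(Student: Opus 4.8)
The plan is to unpack the definition of L\"ob Safe for the two operators of \LSED\ and reduce it to a single non-derivability claim. Both $\Kns{i}$ and $\Bels{i}$ are normal (each has its Distribution axiom and is closed under its Necessitation rule), and we are assuming the language is rich enough to express L\"ob sentences; so L\"ob Conditions (1), (2), and (4) hold for each operator. Moreover $\Kns{i}$ is serial, since the Truth axiom forces $\Rel{K}$ to be reflexive and reflexivity implies seriality, and $\Bels{i}$ is serial by Belief Consistency. Hence the ``not serial'' disjunct in the definition is unavailable for either operator, and L\"ob Safety comes down to showing that Condition (3), Axiom 4, fails for each operator: that neither $\Kns{i}\varphi \iimplies \Kns{i}\Kns{i}\varphi$ nor $\Bels{i}\varphi \iimplies \Bels{i}\Bels{i}\varphi$ is a theorem of \LSED. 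This is exactly the step that went wrong in the crashing logics, where Axiom 4 was \emph{derivable} (from Truth plus Negative Introspection for $\Kns{i}$, or from Conscious Belief plus Knowledge entails Belief for $\Bels{i}$); here those derivations are blocked because Negative Introspection is absent and Conscious Belief has been weakened to Reasonable Belief, which yields only $\Bels{i}\varphi \iimplies \BPoss{i}\Kns{i}\varphi$ rather than $\Bels{i}\varphi \iimplies \Bels{i}\Kns{i}\varphi$.

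To establish non-derivability rigorously I would invoke soundness of \LSED\ (available because the axioms are Sahlqvist formulas, so the logic is sound with respect to the class of frames meeting the corresponding first-order conditions): reflexivity of $\Rel{K}$, seriality of $\Rel{B}$, the inclusion $\Rel{B} \subseteq \Rel{K}$ from Knowledge entails Belief, and the Reasonable-Belief condition that every world $w$ has a $\Rel{B}$-successor $u$ with $\Rel{K}[u] \subseteq \Rel{B}[w]$. It then suffices to exhibit one frame in this class that refutes Axiom 4. I would take $W = \{a,b,c\}$ with $\Rel{K} = \Rel{B} = \{(a,a),(b,b),(c,c),(a,b),(b,c)\}$. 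This relation is reflexive (hence serial), satisfies $\Rel{B}\subseteq\Rel{K}$ trivially, and is not transitive, since $a\,\Rel{K}\,b$ and $b\,\Rel{K}\,c$ but not $a\,\Rel{K}\,c$.

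The one delicate point is the Reasonable-Belief condition, but with $\Rel{K}=\Rel{B}$ reflexive it holds automatically: for any $w$ the reflexive witness $u=w$ is a $\Rel{B}$-successor with $\Rel{K}[w]=\Rel{B}[w]$, so $\Rel{K}[u]\subseteq\Rel{B}[w]$. Thus the frame lies in the \LSED\ class. Now fix a valuation making an atom $p$ true exactly at $a$ and $b$. At $a$ we have $\Rel{K}[a]=\{a,b\}$, so $\Kns{i}p$ (equivalently $\Bels{i}p$) holds at $a$; but $\Rel{K}[b]=\{b,c\}$ and $p$ is false at $c$, so $\Kns{i}p$ fails at $b$, and therefore $\Kns{i}\Kns{i}p$ fails at $a$. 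Hence $\Kns{i}p \iimplies \Kns{i}\Kns{i}p$ is false at $a$, and by the identical computation so is $\Bels{i}p \iimplies \Bels{i}\Bels{i}p$. By soundness neither Axiom 4 instance is a theorem, so Condition (3) is false for both $\Kns{i}$ and $\Bels{i}$, and by the definition of L\"ob Safe, \LSED\ is L\"ob Safe.

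The main obstacle I anticipate is identifying the correct first-order frame condition for Reasonable Belief via the Sahlqvist correspondence and confirming the countermodel genuinely satisfies it; the choice $\Rel{K}=\Rel{B}$ is what collapses that condition to a triviality, so the real work is checking that this identification does not accidentally revalidate Axiom 4 (it does not, since the shared relation is non-transitive). A secondary point worth flagging explicitly is that only \emph{soundness} of \LSED, not completeness, is needed for the non-derivability argument, so the proof does not depend on the full Sahlqvist completeness machinery.
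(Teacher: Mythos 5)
Your proposal is correct and follows the same basic strategy as the paper --- reduce L\"ob Safety to the failure of Axiom 4 and refute Axiom 4 semantically via soundness over the Sahlqvist-defined frame class --- but your execution is tighter in two ways that genuinely matter. First, you treat \emph{both} operators: since the definition of L\"ob Safe quantifies over every $\Box_i$, and both $\Rel{k}^i$ (reflexive, by Truth) and $\Rel{b}^i$ (serial, by Belief Consistency) close off the ``not serial'' escape, a falsified L\"ob Condition is needed for $\Kns{i}$ as well as $\Bels{i}$; the paper's proof only exhibits a countermodel for Positive Belief Introspection and is silent on $\Kns{i}$. Second, and more importantly, you verify that your frame actually satisfies the Reasonable Belief condition $\exists z\,(\Rel{b}^i(x,z)\tland \forall z'(\Rel{k}^i(z,z')\Longrightarrow \Rel{b}^i(x,z')))$, which your choice $\Rel{k}^i=\Rel{b}^i$ reflexive makes trivial. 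The paper's own three-world chain $w\to v\to u$ does not survive this check: once $\Rel{k}^i$ is forced to be reflexive and to contain $\Rel{b}^i$ (from Truth and Knowledge entails Belief), we get $\Rel{k}^i[v]\supseteq\{v,u\}\not\subseteq\Rel{b}^i[w]=\{v\}$, so that model is not an \LSED\ model and the non-derivability claim does not follow from it as stated. Your single reflexive non-transitive frame repairs both defects at once, and your observation that only soundness (not the full completeness half of Sahlqvist's theorem) is needed is correct and worth making explicit.
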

\begin{proof}[Proof.]
	We show that the $\Bels{i}$ operator does not satisfy Axiom 4, or Positive Belief Introspection, as a theorem, meaning some L\"ob Condition is false for it.

	In the counterexample below, we see that the belief operator lacks positive introspection, where $\Bels{i}$ is the belief operator and $\Rel{b}^i$ is the relation defining accessibility for belief.

\begin{figure}[H]
	\begin{center}
		\begin{tikzpicture}[->,>=stealth',shorten >=1pt,auto,node distance=3cm,
		thick,base node/.style={circle,draw,minimum size=35pt}]
		
		\node[base node] (w) {\begin{tabular}{c}
			$w:  p$ \\ $\Bels{i}p$ \\ $\tlnot\Bels{i}\Bels{i}p$
			\end{tabular}};
		\node[base node] (v) [right of=w] {\begin{tabular}{c}
			$v:  p$ \\ $\tlnot\Bels{i}p $
			\end{tabular}};
		\node[base node] (u) [right of=v] {$u: \tlnot p$};
		\path[]
		(w) edge node[above] {$\Rel{b}^i$} (v)
		(v) 
		edge node[above] {$\Rel{b}^i$} (u)
		(u) edge [<-, loop above] node {$\Rel{b}^i$} (u);
		
		\end{tikzpicture}
	\end{center}
	\caption{A counterexample to $\Bels{i}\varphi\iimplies\Bels{i}\Bels{i}\varphi$.}
\end{figure}
\end{proof}

The Sahlqvist class of modal formulas are described for their formal properties, and if a logic is axiomatized only by Sahlqvist formulas, then it is sound and complete with respect to first order frame relations corresponding to each axiom.\cite{sahlqvist} The axioms for \LSED\ are each Sahlqvist formulas, so \LSED\ is sound and complete with respect to frames defined by the corresponding first order formulas. We show the translation from Reasonable Belief to the corresponding first order frame condition. Since the first order frame conditions of the other axioms are well known, this suffices to show that \LSED\ is sound and complete.

\emph{Soundness and Completeness}. In proving soundness and completeness we apply the Sahlqvist-van Benthem Algorithm, which is described in detail by Blackburn in \cite{modal}.

$\Bels{i}\varphi \iimplies \BPoss{i}\Kns{i}\varphi$.
\begin{align*}
	&\rightsquigarrow \forall P,y, (\Rel{b}^i(x,y)\Longrightarrow P(y)) \Longrightarrow \exists z, \forall z',(\Rel{b}^i(x,z)\tland (\Rel{k}^i(z,z') \Longrightarrow P(z')))\\
	&\rightsquigarrow \forall y, (\Rel{b}^i(x,y)\Longrightarrow \lambda u.(\Rel{b}^i(x,u))(y)) \\ &\ \ \ \ \Longrightarrow \exists z,\forall z', (\Rel{b}^i(x,z)\tland (\Rel{k}^i(z,z') \Longrightarrow \lambda u.(\Rel{b}^i(x,u))(z')))\\
	&\rightsquigarrow \forall y, (\Rel{b}^i(x,y)) \Longrightarrow \exists z,\forall z', (\Rel{b}^i(x,z)\tland (\Rel{k}^i(z,z') \Longrightarrow  (\Rel{b}^i(x,z'))))		
\end{align*}

It is an unfamiliar frame condition, but it is a first order formula, and because the Sahlqvist Correspondence Theorem establishes correspondence between the modal formula and the first order formula, it entails that the class of frames defined by the first order formula as a frame condition are those for which the modal formula are sound. The antecedent is satisfied for reflexive and serial frames, so we can focus on the consequent to get an intuition about the condition. It says that the compose of $\Rel{k}^i \circ \Rel{b}^i$ is a subset of the $\Rel{b}^i$ relation, meaning $i$'s beliefs are constrained to propositions that seem possibly known. She can reflect on her beliefs and always consider it reasonable that she knows. This functions as a \textit{constraint} on her belief.
%
%

We have shown that \LSED\ lacks one of the L\"ob Conditions, and is therefore L\"ob Safe, but we provide the following counterexample to L\"ob's Theorem for good measure.

\begin{figure}[H]
	\begin{center}
		\begin{tikzpicture}[->,>=stealth',shorten >=1pt,auto,node distance=4.5 cm,
		thick,base node/.style={circle,draw,minimum size=35pt}]
		
		\node[base node] (w) {\begin{tabular}{c}
			$w:  \Bels{i}(\Bels{i}p \iimplies p)$ \\ $\tlnot\Bels{i}p$
			\end{tabular}};
		\node[base node] (v) [right of=w] {\begin{tabular}{c}
			$v:  \Bels{i}p \iimplies p $ \\ $\tlnot p $
			\end{tabular}};
		\path[]
		(w) edge node[above] {$\Rel{b}^i$} (v)
		(v) 
		edge [<-, loop above] node {$\Rel{b}^i$} (v);
		
		\end{tikzpicture}
	\end{center}
	\caption{A counterexample to $\Bels{i}(\Bels{i}\varphi \iimplies \varphi) \iimplies \Bels{i}\varphi$.}
\end{figure}

We assume in world $w$ that $\Bels{i}(\Bels{i}p \iimplies p)$ holds. From this it follows that in all worlds accessible via the $\Rel{b}^i$, \emph{e.g.} $v$, $\Bels{i}p \iimplies p$ holds. This holds for $v$ when it has reflexive access only to itself, and $\tlnot p$ is the case. Because $\tlnot p$ is the case $v$, $\tlnot \Bels{i}p$ is the case at $w$, concluding the counterexample.

The logic also includes a sort of weakened positive introspection theorem about knowledge, from the Knowledge implies Belief axiom and Weakly Reasonable Belief. It is $\Kns{i}\varphi \iimplies \Poss{i}\Kns{i}\varphi$, which we read as ``if $i$ knows that $\varphi$ then it is possible for $i$ given the evidence that $i$ knows $\varphi$". This seems intuitive, and does not allow L\"ob's theorem to destroy the integrity of knowledge. It is perhaps a satisfying compromise for those who find positive introspection about knowledge to be intuitive. The contrapositive of this weak positive introspection formula, $\Kns{i}\tlnot\Kns{i}\varphi \iimplies \tlnot\Kns{i}\varphi$ is an instance of the T axiom (Truth axiom), so it turns out to have been a theorem all along anyway, for any epistemic logic with the Truth axiom for knowledge.

An alternative L\"ob Safe logic uses $\Bels{i}\varphi \iimplies \Poss{i}\Kns{i}\varphi$ as a relaxed axiom, which might be called Supported Belief, based on its relating of belief explicitly to evidence of knowledge. We can state this as ``$i$ believes $\varphi$ only if some evidence for $i$ supports that $i$ knows $\varphi$. The relationship between ``seemingly possible" and ``possible given the evidence" is outside of the present scope, but we briefly note that the claims we are here committed to are that ``seemingly possible" entails ``evidentially possible", which may intuitively function as a constraint on agents to base perceptions on reasons and evidence, hence our names for Reasonable Belief and Supported Belief. Supported Belief is weaker than Reasonable Belief.

\begin{theorem}[Reasonable Belief Implies Supported Belief]\label{wser_wer}
	$\\ (\Bels{i}\varphi \iimplies \BPoss{i}\Kns{i}\varphi)
	\iimplies (\Bels{i}\varphi \iimplies \Poss{i}\Kns{i}\varphi).$
\end{theorem}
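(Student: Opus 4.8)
The plan is to collapse the object-level implication to a single nontrivial link and then finish by propositional reasoning. Observe that the statement has the propositional shape $(A \iimplies B) \iimplies (A \iimplies C)$ with $A = \Bels{i}\varphi$, $B = \BPoss{i}\Kns{i}\varphi$, and $C = \Poss{i}\Kns{i}\varphi$. This schema is a propositional theorem whenever $B \iimplies C$ is itself a theorem, by transitivity of implication (hypothetical syllogism). Hence the whole task reduces to establishing the single link $\BPoss{i}\Kns{i}\varphi \iimplies \Poss{i}\Kns{i}\varphi$ inside \LSED.

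First I would prove the more general lemma $\BPoss{i}\psi \iimplies \Poss{i}\psi$, which is precisely the dual of the Knowledge entails Belief axiom $\Kns{i}\varphi \iimplies \Bels{i}\varphi$. Starting from that axiom, substitute $\tlnot\psi$ for $\varphi$ to get $\Kns{i}\tlnot\psi \iimplies \Bels{i}\tlnot\psi$, take the contrapositive to obtain $\tlnot\Bels{i}\tlnot\psi \iimplies \tlnot\Kns{i}\tlnot\psi$, and rewrite each side using the duality of the modal operators, namely that $\tlnot\Bels{i}\tlnot\psi$ is $\BPoss{i}\psi$ and $\tlnot\Kns{i}\tlnot\psi$ is $\Poss{i}\psi$. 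This yields $\BPoss{i}\psi \iimplies \Poss{i}\psi$, the formal content of the commitment that what is seemingly possible is evidentially possible. As an independent check, Knowledge entails Belief imposes the frame condition $\Rel{b}^i \subseteq \Rel{k}^i$, so any $\Rel{b}^i$-accessible witness for $\psi$ is also a $\Rel{k}^i$-accessible witness, which is exactly $\BPoss{i}\psi \iimplies \Poss{i}\psi$ read semantically.

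Next I would instantiate the lemma at $\psi = \Kns{i}\varphi$ to get the required link $\BPoss{i}\Kns{i}\varphi \iimplies \Poss{i}\Kns{i}\varphi$. Finally, assuming the antecedent Reasonable Belief $\Bels{i}\varphi \iimplies \BPoss{i}\Kns{i}\varphi$, chaining it with this link by transitivity produces Supported Belief $\Bels{i}\varphi \iimplies \Poss{i}\Kns{i}\varphi$; discharging the assumption gives the object-level implication asserted by the theorem.

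The only substantive insight is recognizing that the missing link is nothing more than the dual of an axiom the system already contains, so no new principle is needed beyond Knowledge entails Belief, and everything else is propositional bookkeeping. The one place to be careful is the contraposition-and-dualization step, where one must correctly match $\tlnot\Bels{i}\tlnot$ with $\BPoss{i}$ and $\tlnot\Kns{i}\tlnot$ with $\Poss{i}$; a sign slip there is the easiest way to accidentally derive the converse, and false, inclusion $\Poss{i}\psi \iimplies \BPoss{i}\psi$.
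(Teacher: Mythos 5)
Your proof is correct and follows essentially the same route as the paper: reduce the claim to the link $\BPoss{i}\Kns{i}\varphi \iimplies \Poss{i}\Kns{i}\varphi$ and obtain that link as the contraposed (dualized) form of the Knowledge entails Belief axiom. You merely spell out the substitution, contraposition, and $\tlnot\Box\tlnot$/$\Diamond$ duality steps that the paper's one-line proof leaves implicit.
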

\begin{proof}
	It suffices to show that $\BPoss{i}\Kns{i}\varphi \iimplies \Poss{i}\Kns{i}\varphi$. This is an instance of Knowledge implies Belief, contraposed.
\end{proof}

This weaker logic, which we call \LSEDmin, for Supported L\"ob-Safe Epistemic Doxastic logic, is axiomatized as follows.

\begin{table}[H]
	\begin{center}
		\begin{tabular}{| l r |}
			\hline
			$\Kns{i}(\varphi \iimplies \psi) \iimplies (\Kns{i}\varphi \iimplies \Kns{i}\psi)$ & Distribution of $\Kns{i}$ \\
			$\Kns{i}\varphi \iimplies \varphi$ & Truth \\
			$\Bels{i}(\varphi \iimplies \psi) \iimplies (\Bels{i}\varphi \iimplies \Bels{i}\psi)$ & Distribution of $\Bels{i}$\\
			$\Bels{i}\varphi \iimplies \BPoss{i}\varphi$ & Belief Consistency \\
			$\Kns{i}\varphi \iimplies \Bels{i}\varphi$ & Knowledge entails Belief \\
			$\Bels{i}\varphi \iimplies \Poss{i}\Kns{i}\varphi$ & Supported Belief\\
			From $\vdash \varphi$ and $\vdash \varphi \iimplies \psi$, infer $\vdash\psi$ & Modus Ponens\\
			From $\vdash \varphi$, infer $\vdash \Kns{i}\varphi$ & Necessitation of $\Kns{i}$\\
			\hline
		\end{tabular}
		\caption{Logic of \LSEDmin}~\label{lsedmin}
	\end{center}
\end{table}

\begin{theorem}
	\LSEDmin\ is L\"ob Safe.
\end{theorem}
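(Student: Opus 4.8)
The plan is to reuse the template of the preceding \LSED\ L\"ob-Safety proof. By the definition of \textbf{L\"ob Safe}, it suffices to show, for each of the two operators $\Kns{i}$ and $\Bels{i}$, that at least one L\"ob Condition fails (or that the operator is not serial). Both operators remain normal, so Conditions 2 (Axiom K) and 4 (Necessitation) hold, and the language still admits L\"ob sentences, so Condition 1 holds. The belief operator is serial by Belief Consistency and the knowledge operator is serial (indeed reflexive) by Truth, so seriality is present in both cases. Hence the entire argument reduces to showing that Condition 3, Axiom 4 (Positive Introspection), is \emph{not} a theorem of \LSEDmin\ for either operator.

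The cleanest route exploits the weakening relationship already recorded in Theorem \ref{wser_wer}. Every axiom of \LSEDmin\ coincides with the corresponding axiom of \LSED\ except that Supported Belief replaces Reasonable Belief, and Theorem \ref{wser_wer} shows that Reasonable Belief, together with Knowledge entails Belief, derives Supported Belief. Thus every axiom of \LSEDmin\ is a theorem of \LSED, so $\mathrm{Thm}(\LSEDmin) \subseteq \mathrm{Thm}(\LSED)$. Since the \LSED\ L\"ob-Safety theorem already established that $\Bels{i}\varphi \iimplies \Bels{i}\Bels{i}\varphi$ is not a theorem of \LSED, it is \emph{a fortiori} not a theorem of the weaker \LSEDmin. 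So Condition 3 fails for $\Bels{i}$, which is exactly what L\"ob Safety demands for the belief operator. The same inclusion disposes of the knowledge operator: $\Kns{i}\varphi \iimplies \Kns{i}\Kns{i}\varphi$ is neither an axiom of \LSED\ nor derivable there (from $\Kns{i}\varphi$ one obtains only $\Bels{i}\varphi$ and then $\BPoss{i}\Kns{i}\varphi$, never $\Kns{i}\Kns{i}\varphi$), hence it is not derivable in \LSEDmin\ either, and Axiom 4 fails for $\Kns{i}$ as well.

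For concreteness, and to match the style of the \LSED\ proof, I would then supply an explicit countermodel. Here the key observation is that every \LSED\ frame is automatically a \LSEDmin\ frame: since all \LSEDmin\ axioms are \LSED\ theorems and the inference rules preserve frame-validity, any frame validating the \LSED\ axioms validates the \LSEDmin\ axioms too. Consequently the three-world belief frame already used for \LSED\ is a legitimate \LSEDmin\ model, and it refutes $\Bels{i}\varphi \iimplies \Bels{i}\Bels{i}\varphi$ at its root world. By the soundness direction of the Sahlqvist argument for \LSEDmin\ (Supported Belief being a Sahlqvist formula), a formula refuted on some \LSEDmin\ model cannot be a theorem, which re-confirms that Positive Belief Introspection is unavailable.

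I expect the main obstacle to be the bookkeeping of the countermodel rather than the high-level argument. The \LSED\ figure displays only the belief relation $\Rel{b}^i$, so to certify the model as a genuine \LSEDmin\ frame I must also fix the knowledge relation $\Rel{k}^i$ and verify the first-order frame condition for Supported Belief obtained from the same Sahlqvist--van Benthem translation, together with reflexivity of $\Rel{k}^i$ for Truth and $\Rel{k}^i \subseteq \Rel{b}^i$ for Knowledge entails Belief. Checking these conditions on so small a frame is routine, but it is the step where an oversight would most likely hide; for that reason the theorem-inclusion argument of the second paragraph is the safer primary proof, and the countermodel is strictly optional corroboration.
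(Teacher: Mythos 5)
Your proposal is correct and its primary argument---using Theorem \ref{wser_wer} to show that every theorem of \LSEDmin\ is a theorem of \LSED\ and then inheriting L\"ob Safety from \LSED---is exactly the paper's own proof. The additional material (the explicit treatment of the knowledge operator and the optional countermodel with its frame-condition bookkeeping) goes beyond what the paper records but does not alter the approach.
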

\begin{proof}[Proof.]
	Due to Theorem \ref{wser_wer}, the set of theorems of \LSEDmin\ is a subset of those of \LSED. \LSED\ is L\"ob Safe, so \LSEDmin\ is as well.
\end{proof}

\emph{Soundness and Completeness}. We once again apply the Sahlqvist-van Benthem Algorithm to generate the frame condition corresponding to Supported Belief.

	$\Bels{i}\varphi \iimplies \Poss{i}\Kns{i}\varphi$.
\begin{align*}
&\rightsquigarrow \forall P,y, (\Rel{b}^i(x,y)\Longrightarrow P(y)) \Longrightarrow \exists z, \forall z',(\Rel{k}^i(x,z)\tland (\Rel{k}^i(z,z') \Longrightarrow P(z')))\\
&\rightsquigarrow \forall y, (\Rel{b}^i(x,y)\Longrightarrow \lambda u.(\Rel{k}^i(x,u))(y)) \\ &\ \ \ \ \Longrightarrow \exists z,\forall z', (\Rel{k}^i(x,z)\tland (\Rel{k}^i(z,z') \Longrightarrow \lambda u.(\Rel{b}^i(x,u))(z')))\\
&\rightsquigarrow \forall y, (\Rel{b}^i(x,y)) \Longrightarrow \exists z,\forall z', (\Rel{k}^i(x,z)\tland (\Rel{k}^i(z,z') \Longrightarrow  (\Rel{b}^i(x,z'))))		
\end{align*}

This formula captures an analogous class of frames where it is the compose of $(\Rel{k}^i \circ \Rel{k}^i)$ that is a subset of $\Rel{b}^i$. Intuitively, this constrains $i$'s beliefs to propositions that are one epistemic step away from knowledge. She has evidence, though inconclusive, that her beliefs constitute knowledge.

\section{An Alternative Approach}
In \cite{Critch16}, Critch develops a solution to the L\"obian Obstacle that leans into L\"ob. He identifies a version of L\"ob's Theorem that is parametrically bounded in proof length by the length of an input program. His purpose is to leverage L\"ob's Theorem in the context of program-agents playing Prisoner's Dilemmas against each other, with access to each others source code. In this context, a certain type of program-agent can be defined that can systematically cooperate with other program-agents, including itself, for which it can prove that the other program-agent will likewise cooperate. 

This use of L\"ob's Theorem represents an intriguing approach to achieving cooperation in the Prisoner's Dilemma. The modality in this case is that of provability logic, the previously mentioned $\mathbb{GL}$ with parametric bounds introduced. L\"ob's Theorem is valid, and other properties like Truth and Belief Consistency are invalid. By avoiding the Truth Axiom, provability logic soundly includes L\"ob's Theorem, and thus avoids crashing. Critch demonstrates its usefulness in developing agents that cooperate. This modality models an agent that is not similar to a rational human interacting with the external world. Rather, this agent is firmly embedded in the world of mathematical knowledge. So, we can consider our approach to be one that complements the work done by Critch, offering an alternative model of agent epistemology that avoids crashing into the L\"obian Obstacle. Whether game theoretic agents with one of our L\"ob Safe logics as a foundation would cooperate or defect in the Prisoner's Dilemma remains the subject of future research.

\section{Conclusion}
We have presented a problem facing agents capable of reflective reasoning and explained how the L\"ob Conditions (Axiom K, Axiom 4, and Rule of Necessitation), in conjunction with reflective sentences, derive L\"ob's Theorem. We showed that standard assumptions about knowledge and belief cause the resulting logics to crash into L\"ob's obstacle, for both knowledge operators and belief operators, resulting in inconsistency. These logics include \SFive, \KDFourFive, $\mathcal{S}\mathit{4}$, and Kraus and Lehmann's epistemic doxastic logic. Most models of agency in game theory, computer science, and philosophy, involve one of these logics as a foundational element, and therefore are unsuitable for modeling reflective agents.

We responded to this by presenting \LSED\ and \LSEDmin, which avoid L\"ob's Obstacle while maintaining a defensible model of human-like knowledge and belief, suitable for rational agents interacting with the external world. 


\bibliographystyle{spbasic}      
\bibliography{bibliography}   

%
%

\end{document}